            \DeclareFontFamily{U}{wncy}{}
            \DeclareFontShape{U}{wncy}{m}{n}{%
               <5>wncyr5%
               <6>wncyr6%
               <7>wncyr7%
               <8>wncyr8%
               <9>wncyr9%
               <10>wncyr10%
               <11>wncyr10%
               <12>wncyr6%
               <14>wncyr7%
               <17>wncyr8%
               <20>wncyr10%
               <25>wncyr10}{}
\newtheorem{thm}{Theorem}[section]
\newtheorem{lem}[thm]{Lemma}
\newtheorem{cor}[thm]{Corollary}
\newtheorem{prop}[thm]{Proposition}
\theoremstyle{definition}
\newtheorem*{dfn}{Definition}
\newtheorem{remark}{Remark}[section]
\newtheorem{example}{Example}[]
\newcommand{\N}{\mathbb N}
\newcommand{\Z}{\mathbb Z}
\newcommand{\F}{\mathbb F}
\def\al{\alpha}
\def\be{\beta}
\def\s{\sigma}
\def\d{\delta}
\def\la{\lambda}
\begin{document}

\title[Self-Orthogonality, Self-Duality of MPC's]{On Self-Orthogonality and Self-Duality of Matrix-Product Codes over Commutative Rings}

\author{Abdulaziz Deajim}
\address[A. Deajim]{Department of Mathematics, King Khalid University,
P.O. Box 9004, Abha, Saudi Arabia} \email{deajim@kku.edu.sa, deajim@gmail.com}

\author{Mohamed Bouye}
\address[M. Bouye]{Department of Mathematics, King Khalid University,
P.O. Box 9004, Abha, Saudi Arabia} \email{medeni.doc@gmail.com}

\keywords{matrix-product code, self-orthogonality, self-duality}
\subjclass[2010]{94B05, 94B15, 16S36}
\date{\today}

\begin {abstract}
Let $R$ be a commutative ring with identity. The paper studies the problem of self-orthogonality and self-duality matrix-product codes (MPCs) over $R$. Some methods as well as special matrices are introduced for the construction of such MPCs. A characterization of such codes (in a special case) is also given. Some concrete examples are presented throughout the paper.
\end {abstract}
\maketitle

\section{{\bf Introduction}}\label{intro}
Besides being coding-theoretically interesting in their own right, Euclidean self-dual and self-orthogonal codes proved to be very useful in diverse areas of mathematics and its applications such as group theory, combinatorial designs, communication systems, and lattice theory (see \cite{DKL}, \cite{Dzh}, \cite{Pless}, and \cite{Wan}). On the other hand, Blackmore and Norton, in their pioneering paper \cite{BN}, introduced the important notion of matrix-product codes (MPCs) over finite fields. An MPC utilizes a finite list of (input) codes of the same length to produce a longer code. The parameters and decoding capabilities of some of such codes were studied by many authors (see for instance \cite{BN}, \cite{HLR}, and \cite{HR}). Some authors also considered MPCs and some of their properties over certain finite commutative rings (see for instance \cite{As}, \cite{BD}, and \cite{FLL}).

To connect the aforementioned concepts, one would ask: Under what conditions can one construct a self-orthogonal or self-dual MPC over a finite field? To the best of the authors' knowledge, the work of Mankean and Jitman \cite{MJ} (which is a follow-up on \cite{M}) was the first published work that addresses this question. The aim of this paper is to consider the above question over an arbitrary commutative ring with unity (finite or infinite). Among other contributions, we generalize some results of \cite{MJ} and, further, relax some of their requirements.

For the paper to be self-contained, we give in Section \ref{prelim} the necessary preliminary definitions and results. It is assumed throughout the paper that the ring over which the codes are considered is a commutative ring with unity. In Section \ref{ss}, sufficient conditions are given for an MPC to be self-orthogonal (Theorems \ref{self-orth 1} and \ref{self-orth 2}, and Corollary \ref{orthog matrix}) or self-dual (Theorem \ref{self-dual}). Theorem \ref{self MPC} introduces a condition under which we get a characterization of self-orthogonal and self-dual MPCs. Theorem \ref{dual of MPC} gives a description of the dual of an MPC as an MPC, generalizing what is known over finite fields \cite{BN}, finite chain rings \cite{As}, and finite commutative rings \cite{BD}. In Section \ref{application}, special matrices are introduced to be used in constructing self-orthogonal and self-dual MPCs with enhanced minimum distances. Some concrete examples are also given throughout the paper.

\section{{\bf Preliminaries}} \label{prelim}

Throughout this paper $R$ denotes a commutative ring with identity $1$ and $U(R)$ its multiplicative group of units. To present our results under possibly broad assumptions, we choose not to put further restrictions on $R$ unless they are really needed. Recall that {\it a code $C$ over $R$ of length $m$} is a subset of $R^m$; while such a code is said to be {\it linear over $R$} if it is an $R$-submodule of $R^m$. A linear code $C$ over $R$ is said to be {\it free over $R$} if it is free as an $R$-module, where the cardinality of a (free) $R$-basis of $C$ is called the {\it rank of $C$}. If $C$ is a free linear code over $R$ of length $m$ and rank $r$, then a matrix $G\in M_{r\times m}(R)$ whose rows form an $R$-basis of $C$ is called a {\it generating matrix} of $C$. In this case, a given element of $C$ is precisely of the form $xG$ for a unique $x\in R^r$.

Consider the Euclidean inner product on $R^m$ defined by $\langle x , y\rangle = x_1 y_1 +\dots +x_m y_m$ for $x=(x_1, \dots, x_m)$ and $y=(y_1, \dots, y_m)$. If $C$ is a linear code over $R$ of length $m$, define {\it the dual code $C^\perp$ of $C$} to be
$$C^\perp=\{x\in R^m\,|\, \mbox{$\langle x, c\rangle\,=0$ for all $c\in C$}\}.$$
It is easily checked that $C^\perp$ is a linear code over $R$ as well. A linear code $C$ over $R$ is said to be {\it self-orthogonal} if $C\subseteq C^\perp$, and {\it self-dual} if $C=C^\perp$.

If $C$ is a linear code over $R$ of length $m$, recall that the Hamming distance on $C$ is defined by
$$d(x,y)= |\,\{i \,|\, x_i \neq y_i\}\,|$$
for $x=(x_1, \dots, x_m), y=(y_1, \dots, y_m)\in C$. Any distance in this paper is to mean the Hamming distance. The minimum distance of $C$ is then defined to be
$$d(C)=\min\{d(x,y)\,|\, x,y\in C, x\neq y\}.$$
The Hamming weight is defined on $C$ by $\mbox{wt}(x)=d(x,0)$ for $x\in C$. So, $\mbox{wt}(x)=|\,\{i \,|\, x_i \neq 0\}\,|$ for $x=(x_1, \dots, x_m)\in C$. It can be checked that $d(C)=\min\{\mbox{wt}(x)\,|\, x\in C, x\neq 0\}$.

For positive integers $s$ and $l$, we denote by $M_{s\times l}(R)$ the set of all $s\times l$ matrices with entries in $R$. In this paper, we always assume that $s\leq l$. For $A\in M_{s\times l}(R)$, denote by $A^t$ the usual transpose of $A$. If the rows of $A\in M_{s\times l}(R)$ are linearly independent over $R$, we say that {\it $A$ has full rank}. For $\la_1, \dots, \la_s\in R$, denote by $\mbox{diag}(\la_1, \dots, \la_s)\in M_{s\times s}(R)$ the diagonal matrix whose entry in position $i,i$ is $\la_i$, and denote by $\mbox{adiag}(\la_1, \dots, \la_s)\in M_{s\times s}(R)$ the anti-diagonal matrix whose entry in position $i,(s-i+1)$ is $\la_i$. A matrix $A\in M_{s\times s}(R)$ is {\it non-singular} or {\it invertible} if and only if $\mbox{det}(A)\in U(R)$. Note that if $A\in M_{s\times s}(R)$ and $AA^t=\mbox{diag}(\la_1, \dots, \la_s)$ or $\mbox{adiag}(\la_1, \dots, \la_s)$ with $\la_i\in U(R)$ for $i=1, \dots, s$, then both $A$ and $A^t$ are non-singular, as classical properties of the determinant remain valid over commutative rings (see \cite[I.D]{Mc}).

Let $C_1, \dots, C_s$ be linear codes over $R$ of length $m$ and $A\in M_{s\times l}(R)$. Denote by $[C_1 \dots C_s]\,A$ {\it the matrix-product code} (MPC) over $R$ of length $ml$ in the sense of \cite{BN} (see also \cite{As} and \cite{FLL}); that is
$$[C_1 \dots C_s]\,A=\{(c_1 \dots c_s)A\,| \, c_i\in C_i,\, 1\leq i \leq s\},$$
where $(c_1 \dots c_s)$ is an $m\times s$ matrix whose $i$th column is $c_i\in C_i$ written in column form. The codes $C_1, \dots, C_s$ are called the {\it input codes of $[C_1 \dots C_s]\,A$}. Note that as $C_1, \dots, C_s$ are linear over $R$, so is $[C_1 \dots C_s]\,A$. If $A=(a_{i,j})$, a typical codeword $c$ of $[C_1 \dots C_s]\,A$ is an $m\times l$ matrix $c=(x_{i,j})$ with $x_{i,j}=\sum_{i=1}^s a_{i,j}c_{k,i}$, where $c_{k,i}$ is the $k$th component of $c_i$. On the other hand, as the $j$th column of $c$ is $\sum_{i=1}^s a_{i,j}c_i$, we can also look at $c$ in the form
$$c=\left(\sum_{i=1}^s a_{i,1}c_i, \sum_{i=1}^s a_{i,2} c_i, \dots, \sum_{i=1}^s a_{i,l}c_i\right).$$
If $I_s\in M_{s\times s}(R)$ is the identity matrix, we denote the matrix-product code $[C_1 \dots C_s]\,I_s$ by $[C_1 \dots C_s]$. Note that codewords in $[C_1 \dots C_s]$ are matrices of the form $(c_1 \dots c_s)$ with $c_i\in C_i$ for $i=1, \dots, s$.

If $A=(a_{i,j})\in M_{s\times l}(R)$ is of full rank and $C_i$ is a free linear code over $R$ of length $m$, rank $r_i$, and a generating matrix $G_i\in M_{r_i \times m}(R)$ for $i=1, \dots, s$, respectively, it was shown in \cite{BD} that $[C_1 \dots C_s]\,A$ is free of rank $r=\sum_{i=1}^s r_i$ with a generating matrix $(a_{i,j}G_i)\in M_{r\times lm}(R)$.


\section{{\bf Self-Orthogonal and Self-Dual Matrix-Product Codes}}\label{ss}

The following two theorems give sufficient conditions for an MPC to be self-orthogonal.

\begin{thm}\label{self-orth 1}
Let $A=(a_{i,j})\in M_{s\times l}(R)$ be such that $AA^t=\mbox{diag}(\la_1, \dots, \la_s)$ for some $\la_1, \dots, \la_s \in R$. Suppose that $C_1, \dots, C_s$ are linear codes over $R$ of the same length such that, for $i=1, \dots, s$, $C_i$ is self-orthogonal whenever $\la_i\neq 0$. Then, $[C_1 \dots C_s]\,A$ is self-orthogonal.
\end{thm}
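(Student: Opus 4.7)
The plan is to take two arbitrary codewords $c, c' \in [C_1 \dots C_s]\,A$ and show directly that $\langle c, c' \rangle = 0$. Write $c = (c_1 \dots c_s)\,A$ and $c' = (c'_1 \dots c'_s)\,A$ with $c_i, c'_i \in C_i$. Using the column description given in the preliminaries, the $j$th column of $c$ (resp.\ $c'$) is $\sum_{i=1}^s a_{i,j} c_i$ (resp.\ $\sum_{k=1}^s a_{k,j} c'_k$), and the Euclidean inner product of $c$ and $c'$ viewed as vectors in $R^{ml}$ equals the sum over $j$ of the inner products of corresponding columns.

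The key computation I would carry out is the interchange
\[
\langle c, c' \rangle \;=\; \sum_{j=1}^{l} \Bigl\langle \sum_{i=1}^s a_{i,j} c_i,\; \sum_{k=1}^s a_{k,j} c'_k \Bigr\rangle \;=\; \sum_{i,k=1}^{s} \Bigl(\sum_{j=1}^{l} a_{i,j} a_{k,j}\Bigr) \langle c_i, c'_k \rangle.
\]
The inner parenthesized sum is precisely the $(i,k)$-entry of $AA^t$, which by hypothesis equals $\la_i$ when $i=k$ and $0$ otherwise. Hence the double sum collapses to
\[
\langle c, c' \rangle \;=\; \sum_{i=1}^{s} \la_i \, \langle c_i, c'_i \rangle .
\]

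Now the self-orthogonality assumption on the input codes finishes the proof: for each $i$, either $\la_i = 0$, in which case the $i$th term vanishes trivially, or $\la_i \neq 0$, in which case $C_i \subseteq C_i^{\perp}$ gives $\langle c_i, c'_i \rangle = 0$. Either way every term in the last sum vanishes, so $\langle c, c' \rangle = 0$ for all $c, c' \in [C_1 \dots C_s]\,A$, proving $[C_1 \dots C_s]\,A \subseteq ([C_1 \dots C_s]\,A)^{\perp}$.

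I do not expect any real obstacle here; the argument is essentially a bilinearity manipulation that reduces the length-$ml$ inner product to a weighted sum of length-$m$ inner products via the identity $AA^t = \mbox{diag}(\la_1, \dots, \la_s)$. The only point requiring mild care is to be consistent about how an $m \times l$ codeword-matrix is read as a length-$ml$ vector so that the inner product factors as a sum over columns; this is immediate from the definition of $\langle\cdot,\cdot\rangle$ on $R^{ml}$.
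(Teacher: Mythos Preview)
Your proof is correct and follows essentially the same approach as the paper's: expand $\langle c, c'\rangle$ by bilinearity, recognize the coefficient of $\langle c_i, c'_k\rangle$ as the $(i,k)$-entry of $AA^t$, and then use the diagonal hypothesis together with the self-orthogonality assumption to kill every term. If anything, your write-up is a bit tidier, and you correctly sum over $j=1,\dots,l$ where the paper's displayed sums slip into writing $\sum_{j=1}^{s}$.
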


\begin{proof}
Let $c\in [C_1 \dots C_s]\,A$. In order to show that $c\in ([C_1 \dots C_s]\,A)^\perp$, we prove that $\langle c, c'\rangle =0$ for any $c'\in [C_1 \dots C_s]\,A$. Let $$\mbox{$c=(\sum_{i=1}^s a_{i,1}c_i, \sum_{i=1}^s a_{i,2} c_i, \dots, \sum_{i=1}^s a_{i,l}c_i)$ and $c'=(\sum_{i=1}^s a_{i,1}c_i', \sum_{i=1}^s a_{i,2} c_i', \dots, \sum_{i=1}^s a_{i,l}c_i')$}$$ for $c_i, c_i'\in C_i$, $i=1, \dots s$. Then we have
\begin{align*}
\langle c, c'\rangle  &= \sum_{i=1}^s\sum_{j=1}^s a_{i,1}a_{j,1}\; \langle c_i, c_j'\rangle + \sum_{i=1}^s\sum_{j=1}^s a_{i,2}a_{j,2}\; \langle c_i, c_j'\rangle+ \dots +\sum_{i=1}^s\sum_{j=1}^s a_{i,l}a_{j,l}\; \langle c_i, c_j'\rangle\\
&= (\sum_{j=1}^s a_{1,j}a_{1,j}) \;\langle c_1, c_1'\rangle + \dots + (\sum_{j=1}^s a_{1,j}a_{s,j}) \;\langle c_1, c_s'\rangle\\
& \qquad + (\sum_{j=1}^s a_{2,j}a_{1,j}) \;\langle c_2, c_1'\rangle + \dots + (\sum_{j=1}^s a_{2,j}a_{s,j}) \;\langle c_2, c_s'\rangle\\
&\qquad\qquad +\cdots \\
&\qquad\qquad\qquad + (\sum_{j=1}^s a_{s,j}a_{1,j}) \;\langle c_s, c_1'\rangle + \dots + (\sum_{j=1}^s a_{s,j}a_{s,j}) \;\langle c_s, c_s'\rangle.
\end{align*}
Now, for each $i=1, \dots s$, $(\sum_{j=1}^s a_{i,j}a_{j,i}) \langle c_i, c_i'\rangle =\la_i \langle c_i, c_i'\rangle =0$, because either $\la_i=0$ or, else, $\langle c_i,  c_i'\rangle =0 $ (since $C_i$ is self-orthogonal in this case). On the other hand, for $i\neq k$, $(\sum_{j=1}^s a_{i,j}a_{k,j}) \langle c_i, c_k'\rangle =0$ as well, because $\sum_{j=1}^s a_{i,j}a_{k,j}$ is the entry of $AA^t$ in position $i,k$, which is 0 by assumption. Hence, $\langle c, c'\rangle =0$ as desired.
\end{proof}

\begin{remark}
Theorem \ref{self-orth 1} generalizes, and relaxes the assumptions of, \cite[Theorem III.1]{MJ} which assumes that $R$ is a finite field, the input codes are free and are all self-orthogonal.
\end{remark}

\begin{example}\label{ex 1}
Over $\Z_{20}$, let $C_1=10Z_{20}=\{0, 10\}$ and $C_2=4\Z_{20}=\{0,4,8,12,16\}$. It can be seen that $C_1^\perp=2\Z_{20}=\{0,2,4,6,8,10,12,14,16,18\}$ and, thus, $C_1$ is self-orthogonal. Note, on the other hand, that $C_2^\perp=5\Z_{20}=\{0,5,10,15\}$ and, thus, $C_2$ is not self-orthogonal. Take $A=\left(\begin{array}{cc} 1 &2 \\ 0 &0 \end{array}\right)$. So, $AA^t=\mbox{diag}(5,0)$. By Theorem \ref{self-orth 1}, $[C_1 C_2]\,A$ is self-orthogonal. Indeed, $[C_1C_2]\,A=10\Z_{20}\times \{0\}$, $([C_1 C_2]\,A)^\perp =2\Z_{20}\times \Z_{20}$, and obviously $[C_1 C_2]\,A\subseteq ([C_1 C_2]\,A)^\perp$.
\end{example}

\begin{thm}\label{self-orth 2}
Let $A\in M_{s\times l}(R)$ be such that $AA^t=\mbox{adiag}(\la_1, \dots, \la_s)$ for some $\la_1, \dots, \la_s \in R$. Suppose that $C_1, \dots, C_s$ are linear codes over $R$ of the same length such that, for $i=1, \dots, s$, $C_i\subseteq C_{s-i+1}^\perp $ whenever $\la_i\neq 0$. Then, $[C_1 \; \dots \; C_s]\,A$ is self-orthogonal.
\end{thm}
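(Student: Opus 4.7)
The plan is to follow the blueprint of Theorem~\ref{self-orth 1} almost verbatim, merely replacing the diagonal pattern of $AA^t$ by the anti-diagonal one. First I would take two arbitrary codewords $c,c'\in[C_1\dots C_s]\,A$ and write them in the explicit column-sum form
$$c=\left(\sum_{i=1}^s a_{i,1}c_i,\,\dots,\,\sum_{i=1}^s a_{i,l}c_i\right),\qquad c'=\left(\sum_{i=1}^s a_{i,1}c_i',\,\dots,\,\sum_{i=1}^s a_{i,l}c_i'\right),$$
with $c_i,c_i'\in C_i$ for $i=1,\dots,s$.

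Next, using bilinearity of the Euclidean inner product, I would expand and regroup to obtain
$$\langle c,c'\rangle=\sum_{i=1}^s\sum_{k=1}^s\left(\sum_{j=1}^l a_{i,j}a_{k,j}\right)\langle c_i,c_k'\rangle,$$
and observe that the scalar $\sum_{j=1}^l a_{i,j}a_{k,j}$ is precisely the $(i,k)$-entry of $AA^t$. The anti-diagonal hypothesis now does the work: $(AA^t)_{i,k}=0$ unless $k=s-i+1$, in which case it equals $\la_i$. Hence the double sum collapses to
$$\langle c,c'\rangle=\sum_{i=1}^s\la_i\langle c_i,c_{s-i+1}'\rangle.$$

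Each summand vanishes: if $\la_i=0$ the term is manifestly zero, while if $\la_i\neq 0$ the hypothesis $C_i\subseteq C_{s-i+1}^\perp$, combined with $c_i\in C_i$ and $c_{s-i+1}'\in C_{s-i+1}$, forces $\langle c_i,c_{s-i+1}'\rangle=0$. Thus $\langle c,c'\rangle=0$, so $c\in([C_1\dots C_s]\,A)^\perp$, yielding self-orthogonality.

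I do not anticipate any serious obstacle: the argument is essentially a bookkeeping exercise that converts the matrix identity $AA^t=\mbox{adiag}(\la_1,\dots,\la_s)$ into the vanishing of the cross terms in a bilinear expansion. The one point requiring care is the index matching, which is why the orthogonality hypothesis is imposed between $C_i$ and $C_{s-i+1}$ rather than on each $C_i$ by itself; this shift is exactly what the anti-diagonal form of $AA^t$ forces.
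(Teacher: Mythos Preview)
Your proposal is correct and is exactly what the paper intends: its own proof reads simply ``Similar to the proof of Theorem~\ref{self-orth 1} with the obvious adjustments,'' and you have carried out precisely those adjustments, identifying $\sum_{j=1}^l a_{i,j}a_{k,j}$ as the $(i,k)$-entry of $AA^t$ and using the anti-diagonal shape to reduce $\langle c,c'\rangle$ to $\sum_i \la_i\langle c_i,c_{s-i+1}'\rangle$.
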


\begin{proof}
Similar to the proof of Theorem \ref{self-orth 1} with the obvious adjustments.
\end{proof}

\begin{remark}
Theorem \ref{self-orth 2} generalizes, and relaxes the assumptions of, \cite[Theorem III.4]{MJ} which assumes that $R$ is a finite field, the input codes are free, and $C_i\subseteq C_{s-i+1}^\perp $ for all $i=1, \dots, s$.
\end{remark}

\begin{example}\label{ex 2}
Let $C_1$ and $C_2$ be as in Example \ref{ex 1}. Then, $C_1\subseteq C_2^\perp$ and $C_2\subseteq C_1^\perp$. Let $B=\left(\begin{array}{cccc} 0&2&0&4\\ 0&4&2&0\end{array}\right)$. Then, $BB^t=\mbox{adiag}(8,8)$. It then follows from Theorem \ref{self-orth 2} that both $[C_1 C_2]\,B$ and $[C_2 C_1]\,B$ are self-orthogonal. Indeed, $$[C_1 C_2]\,B=\{(0,0,0,0),(0,16,8,0),(0,12,16,0),(0,8,4,0),(0,4,12,0)\},$$
$$[C_2 C_1]\,B=\{(0,0,0,0), (0,8,0,16), (0,16,0,12), (0,4,0,8), (0,12,0,4)\},$$
and it can be checked that $\langle(a,b,c,d), (a',b',c',d')\rangle  =0$ and $\langle(e,f,g,h), (e',f',g',h')\rangle =0$ for all $(a,b,c,d),(a',b',c',d')\in [C_1 C_2]\,B$, and $(e,f,g,h), (e',f',g',h')\in [C_2 C_1]\,B$. Thus, $[C_1 C_2]\,B\subseteq ([C_1 C_2]\,B)^\perp$ and $[C_2 C_1]\,B\subseteq ([C_2 C_1]\,B)^\perp$.
\end{example}

The equality $([C_1 \dots C_s]\,A)^\perp=[C_1^\perp \dots C_s^\perp]\,(A^{-1})^t$ is well-known to hold if $R$ is a finite field or a finite chain ring, $C_i$ are free over $R$, and $A\in M_{s\times s}(R)$ is non-singular (see \cite{BN} and \cite{As} for instance). In \cite{BD}, this equality was shown to hold over any finite commutative ring. In Theorem \ref{dual of MPC} below, we show that this fact remains true over any commutative ring $R$ without even assuming that the input codes are free over $R$.

\begin{thm}\label{dual of MPC}
Let $A\in M_{s\times s}(R)$ be non-singular and $C_1, \dots, C_s$ linear codes of length $m$ over $R$. Then, the dual of the matrix product code $[C_1 \dots C_s]\,A$ is given by $$([C_1, \dots, C_s]\,A)^\perp=[C_1^\perp \dots C_s^\perp]\,(A^{-1})^t.$$
\end{thm}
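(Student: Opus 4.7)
My plan is to identify $R^{ms}$ with $M_{m \times s}(R)$ by viewing each vector as an $m \times s$ matrix (concatenating its $s$ successive blocks of length $m$ as columns). Under this identification a codeword of $[C_1 \dots C_s]\,A$ is exactly the matrix $CA$ with $C=(c_1 \dots c_s)\in M_{m\times s}(R)$ and $c_i\in C_i$, while the Euclidean pairing on $R^{ms}$ takes the form $\langle X,Y\rangle = \mathrm{tr}(XY^t)$. The stated equality of sets then follows from two short matrix computations, which I would handle separately.

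For the inclusion $[C_1^\perp \dots C_s^\perp]\,(A^{-1})^t \subseteq ([C_1 \dots C_s]\,A)^\perp$, I would fix arbitrary $c_i \in C_i$ and $d_i \in C_i^\perp$ and set $C=(c_1\dots c_s)$, $D=(d_1\dots d_s)$. Using the outer-product expansion $CD^t = \sum_{i=1}^{s} c_i d_i^t$ together with linearity of the trace,
$$\mathrm{tr}\bigl((CA)(D(A^{-1})^t)^t\bigr) = \mathrm{tr}(CAA^{-1}D^t) = \mathrm{tr}(CD^t) = \sum_{i=1}^{s}\langle c_i, d_i\rangle = 0,$$
since each $d_i$ lies in $C_i^\perp$. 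This gives one direction.

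For the reverse inclusion, pick $x \in ([C_1 \dots C_s]\,A)^\perp$, view it as a matrix $X \in M_{m\times s}(R)$, and define $D = XA^t$ with columns $d_1, \dots, d_s$. Then $X = D(A^{-1})^t$ as matrices, so $x$ will lie in $[C_1^\perp \dots C_s^\perp]\,(A^{-1})^t$ as soon as I verify that $d_i \in C_i^\perp$ for every $i$. To check this, I would fix $c_i \in C_i$ and test $d_i$ against $c_i$ by considering the particular codeword of $[C_1 \dots C_s]\,A$ produced from the input tuple $(0, \dots, 0, c_i, 0, \dots, 0)$ (legitimate because each $C_j$ contains $0$). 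Its associated matrix is $c_i e_i^t A$, where $e_i$ is the $i$-th standard basis column of $R^s$, and a trace computation in the style of the previous paragraph reduces $\langle x,\; c_i e_i^t A\rangle$ to $\mathrm{tr}(d_i c_i^t) = \langle c_i, d_i\rangle$, which is therefore zero. Hence $d_i \in C_i^\perp$.

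There is no deep obstacle, but the conceptual point worth emphasizing is that no freeness hypothesis on the $C_i$ enters the argument at any stage: the reverse inclusion uses only $0 \in C_j$ and the additive closure of each input code, both of which are built into the definition of a linear code over $R$. This is precisely what permits the generalization beyond the finite-field, finite-chain-ring, and finite-commutative-ring settings treated in \cite{BN}, \cite{As}, and \cite{BD}.
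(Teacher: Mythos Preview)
Your proof is correct and follows essentially the same strategy as the paper's: both directions rest on the identity $XA^t\in[C_1^\perp\dots C_s^\perp]$ (equivalently $X=D(A^{-1})^t$), and the nontrivial inclusion is obtained in each case by testing $x$ against the special codewords coming from inputs $(0,\dots,0,c_i,0,\dots,0)$. The only difference is cosmetic: you package the bilinear computations via the identification $\langle X,Y\rangle=\mathrm{tr}(XY^t)$, which streamlines what the paper writes out coordinatewise.
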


\begin{proof}
Let $A=(a_{i,j})$. We first show that $([C_1 \dots C_s]\,A)^\perp \subseteq [C_1^\perp \dots C_s^\perp]\,(A^{-1})^t$. Let $x=(x_1, \dots, x_s)\in ([C_1 \dots C_s]\,A)^\perp$. Note that $x_i\in R^m$ for every $i$. Then, $\langle x, c\rangle  =0$ for every $c\in [C_1 \dots C_s]\,A$. Then we have, for every $j=1, \dots, s$ and every $c_j\in C_j$,
\begin{align*}
0&= \langle(x_1, \dots, x_s), (\sum_{j=1}^s a_{j,1}c_j, \dots, \sum_{j=1}^s a_{j,s}c_j)\rangle\\
 &= \sum_{i=1}^s \langle x_i, \sum_{j=1}^s a_{j,i}c_j\rangle.
\end{align*}
For a fixed $j$, apply the above equality to all codewords of $[C_1 \dots C_s]\,A$ of the form $(c_1, \dots,c_j, \dots, c_s)\,A$ with $c_i=0$ for $i\neq j$ and $c_j$ running over all codewords of $C_j$ to get $$0=\sum_{i=1}^s \langle x_i, a_{j,i}c_j\rangle  =\sum_{i=1}^s \langle a_{j,i}x_i, c_j\rangle =\langle\sum_{i=1}^s a_{j,i}x_i, c_j\rangle.$$
It then follows that $\sum_{i=1}^s a_{j,i}x_i \in C_j^\perp$. Doing this for every $j=1, \dots, s$, we get $(x_1, \dots, x_s)\,A^t\in [C_1^\perp, \dots, C_s^\perp]$, which yields that $x=(x_1, \dots, x_s)\in [C_1^\perp, \dots, C_s^\perp]\,(A^{-1})^t$.

Conversely, we show that $[C_1^\perp \dots C_s^\perp]\,(A^{-1})^t \subseteq ([C_1 \dots C_s]\,A)^\perp $. Let $x\in [C_1^\perp \dots C_s^\perp]\,(A^{-1})^t$. Then $x=(x_1, \dots, x_s)=(c_1^\perp, \dots, c_s^\perp)\,(A^{-1})^t$, where $c_i^\perp\in C_i^\perp$ for $i=1, \dots, s$. It follows that $(x_1, \dots, x_s)\,A^t=(c_1^\perp, \dots, c_s^\perp)$ and, thus, $\sum_{i=1}^s a_{j,i}x_i =c_j^\perp \in C_j^\perp$ for every $j=1, \dots, s$. This means that, for any fixed $j$ and all $y_j\in C_j$,
$$0=\langle\sum_{i=1}^s a_{j,i}x_i, y_j\rangle  = \sum_{i=1}^s  \langle a_{j,i}x_i, y_j\rangle  = \sum_{i=1}^s \langle x_i, a_{j,i}y_j\rangle.$$
Doing this process for every $j=1, \dots, s$ yields that $\sum_{i=1}^s \langle x_i, \sum_{j=1}^s a_{j,i}y_j\rangle =0$ for all $y_j\in C_j$. So,
$$0= \langle(x_1, \dots, x_s), (\sum_{j=1}^s a_{j,1}y_j, \dots, \sum_{j=1}^s a_{j,s}y_j)\rangle$$ for all $y_j\in C_j$, $j=1, \dots, s$. Thus, $\langle x, c\rangle =0$ for every $c\in [C_1 \dots C_s]\,A$ and, hence, $x\in ([C_1 \dots C_s]\,A)^\perp$.
\end{proof}

\begin{cor}\label{orthog matrix}
Keep the assumptions of Theorem \ref{dual of MPC}, and assume further that $A$ is orthogonal (i.e. $A=(A^{-1})^t$). Then,

1. $([C_1 \dots C_s]\,A)^\perp=[C_1^\perp \dots C_s^\perp]\,A$.

2. If $C_i$ is self-orthogonal for each $i=1, \dots, s$, then so is $[C_1 \dots C_s]\,A$.

3. If $C_i$ is self-dual for each $i=1, \dots, s$, then so is $[C_1 \dots C_s]\,A$.

4. If $C_i^\perp\subseteq C_i$ for each $i=1, \dots, s$, then $([C_1 \dots C_s]\,A)^\perp \subseteq [C_1 \dots C_s]\,A$.
\end{cor}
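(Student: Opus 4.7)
The plan is to derive all four parts as direct consequences of Theorem \ref{dual of MPC}, together with the obvious monotonicity of the MPC construction in its input codes. For part 1, I simply substitute the orthogonality hypothesis $A=(A^{-1})^t$ into the conclusion of Theorem \ref{dual of MPC} to get
$$([C_1\dots C_s]\,A)^\perp=[C_1^\perp\dots C_s^\perp]\,(A^{-1})^t=[C_1^\perp\dots C_s^\perp]\,A.$$
This is essentially a one-line observation and will be the base for everything else.

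Next I would record a small monotonicity remark: if $D_i\subseteq E_i$ are linear codes of length $m$ over $R$ for $i=1,\dots,s$, then $[D_1\dots D_s]\,A\subseteq [E_1\dots E_s]\,A$, because any codeword $(d_1\dots d_s)A$ of the left-hand side automatically lies in the right-hand side. Parts 2 and 4 are then immediate applications: for part 2, the hypothesis $C_i\subseteq C_i^\perp$ combined with monotonicity gives $[C_1\dots C_s]\,A\subseteq [C_1^\perp\dots C_s^\perp]\,A$, which by part 1 equals $([C_1\dots C_s]\,A)^\perp$; for part 4, the reverse hypothesis $C_i^\perp\subseteq C_i$ gives $[C_1^\perp\dots C_s^\perp]\,A\subseteq [C_1\dots C_s]\,A$, and again part 1 lets us read the left side as $([C_1\dots C_s]\,A)^\perp$.

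Part 3 follows from part 1 together with the straightforward observation that if $C_i=C_i^\perp$ for each $i$, then the families $\{C_i\}$ and $\{C_i^\perp\}$ coincide, so $[C_1\dots C_s]\,A=[C_1^\perp\dots C_s^\perp]\,A=([C_1\dots C_s]\,A)^\perp$ by part 1. Alternatively, parts 2 and 4 together yield part 3.

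There is essentially no main obstacle here: the entire corollary is a bookkeeping consequence of Theorem \ref{dual of MPC} once one notes that substituting $A=(A^{-1})^t$ eliminates the transpose-inverse from its conclusion, and that the MPC construction is inclusion-preserving in the input codes. If anything needs care, it is just making sure that the monotonicity statement is recorded cleanly so that parts 2 and 4 can be asserted in one line each.
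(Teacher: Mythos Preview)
Your proposal is correct and matches the paper's approach: the paper's own proof is simply ``Clear!'', and what you have written is exactly the routine unpacking of that word via Theorem \ref{dual of MPC} and the obvious inclusion-preservation of the MPC construction. There is nothing to add or correct.
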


\begin{proof}
Clear!
\end{proof}

\begin{remark}
For part 4 of Corollary \ref{orthog matrix} to hold, orthogonality of $A$ is sufficient but not necessary (see \cite[Theorem 13]{GHR}).
\end{remark}

Note that Corollary \ref{orthog matrix} gives, in particular, a sufficient condition for the self-duality of an MPC. The following theorem gives another sufficient condition.

\begin{thm}\label{self-dual}
Let $A\in M_{s\times s}(R)$ be such that $AA^t=\mbox{adiag}(\la_1, \dots, \la_s)$ for $\la_1, \dots, \la_s \in U(R)$. Suppose that $C_1, \dots, C_s$ are linear codes of the same length over $R$ such that $C_i=C_{s-i+1}^\perp $ for $i=1, \dots, s$. Then, $[C_1 \dots C_s]\,A$ is self-dual.
\end{thm}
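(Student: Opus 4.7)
The plan is to compute $([C_1 \dots C_s]\,A)^\perp$ directly via Theorem \ref{dual of MPC} and show that it coincides with $[C_1 \dots C_s]\,A$.

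First, because each $\la_i$ is a unit in $R$, the matrix $\mbox{adiag}(\la_1,\dots,\la_s)$ is non-singular, and the preliminary remark on determinants then gives that $A$ is non-singular as well. Theorem \ref{dual of MPC} therefore applies and yields
$$([C_1 \dots C_s]\,A)^\perp = [C_1^\perp \dots C_s^\perp]\,(A^{-1})^t.$$
Applying the hypothesis $C_j = C_{s-j+1}^\perp$ with $j$ replaced by $s-i+1$ gives $C_i^\perp = C_{s-i+1}$ for every $i$, so $[C_1^\perp \dots C_s^\perp] = [C_s\, C_{s-1} \dots C_1]$. Hence it suffices to establish the identity $[C_s \dots C_1]\,(A^{-1})^t = [C_1 \dots C_s]\,A$.

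For that, I would factor $\mbox{adiag}(\la_1,\dots,\la_s) = \La P$, where $\La = \mbox{diag}(\la_1,\dots,\la_s)$ and $P = \mbox{adiag}(1,\dots,1)$ is the column-reversal permutation matrix. Using $P^2 = I_s$ and the symmetry of $\La$ and $P$, the equation $AA^t = \La P$ rearranges to $(A^{-1})^t = \La^{-1} P\,A$. Two elementary observations then finish things off. First, scaling the $i$th column of a codeword $(d_1, \dots, d_s) \in [C_s \dots C_1]$ by the unit $\la_i^{-1}$ keeps that column inside the linear code $C_{s-i+1}$, so $[C_s \dots C_1]\,\La^{-1} = [C_s \dots C_1]$. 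Second, right-multiplication by $P$ reverses column order, so $[C_s \dots C_1]\,P = [C_1 \dots C_s]$. Combining, $[C_s \dots C_1]\,(A^{-1})^t = [C_s \dots C_1]\,\La^{-1}P\,A = [C_1 \dots C_s]\,A$, as desired.

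The only subtlety is the index bookkeeping: the hypothesis $C_i = C_{s-i+1}^\perp$ is invariant under $i \mapsto s-i+1$, which is exactly what makes $[C_1^\perp \dots C_s^\perp]$ a reversed copy of the original input list, and the reversal matrix $P$ coming from the factorization of $AA^t$ then cancels that reversal. Once this matching is seen, no deeper computation is needed.
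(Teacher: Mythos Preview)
Your argument is correct and rests on the same key ingredient as the paper's proof: Theorem~\ref{dual of MPC} together with the explicit form of $(A^{-1})^t$ forced by $AA^t=\mbox{adiag}(\la_1,\dots,\la_s)$. The organization, however, is a bit different. The paper first invokes Theorem~\ref{self-orth 2} to get the inclusion $[C_1\dots C_s]\,A\subseteq([C_1\dots C_s]\,A)^\perp$, and then establishes the reverse inclusion by picking an arbitrary element $x=[c_1',\dots,c_s']\,(A^{-1})^t$, writing out $(A^{-1})^t$ entrywise as $(\la_i^{-1}a_{s-i+1,j})$, and rearranging coordinates to exhibit $x$ as $[e_1,\dots,e_s]\,A$ with $e_{s-i+1}=\la_i^{-1}c_i'$. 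You instead prove equality in one stroke by packaging that same entrywise computation into the factorization $(A^{-1})^t=\La^{-1}PA$ and the two set-level identities $[C_s\dots C_1]\,\La^{-1}=[C_s\dots C_1]$ and $[C_s\dots C_1]\,P=[C_1\dots C_s]$; this sidesteps Theorem~\ref{self-orth 2} entirely. One small point worth making explicit: the rearrangement to $(A^{-1})^t=\La^{-1}PA$ (rather than $P\La^{-1}A$) implicitly uses $\La P=P\La$, which follows because $AA^t$ is symmetric; in any case either order works for your set-level computation, so the proof stands as written.
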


\begin{proof}
Let $A=(a_{i,j})$. The containment $[C_1 \dots C_s]\,A \subseteq ([C_1 \dots C_s]\,A)^\perp$ follows from Theorem \ref{self-orth 2}. It remains to show that $([C_1 \dots C_s]\,A)^\perp \subseteq [C_1 \dots C_s]\,A$. Let $x\in ([C_1 \dots C_s]\,A)^\perp$. Then, by Theorem \ref{dual of MPC}, $x=[c_1', c_2', \dots, c_s'](A^{-1})^t$ for some $c_i'\in C_i^\perp$, $i=1, \dots, s$. As $C_i=C_{s-i+1}^\perp $ for each $i=1, \dots, s$, $C_{s-i+1}=C_{s-(s-i+1)+1}^\perp=C_i^\perp$ for each $i=1, \dots, s$. Thus, $c_i' \in C_{s-i+1}$ for each $i=1,\dots, s$. Let $\la_i'\in R$ be such that $\la_i\la_i'=1$ and set $e_{s-i+1}=\la_i' c_i'$ for $i=1, \dots, s$. It follows that $e_{s-i+1} \in C_{s-i+1}$ for $i =1, \dots, s$ since $\la_i' \in R$ and $C_{s-i+1}$ is linear over $R$. As $AA^t=\mbox{adiag}(\la_1, \dots, \la_s)$, it follows that
$$(A^{-1})^t= (\la_i' a_{s-i+1,j})=  \left( \begin{array}{cccc} \la_1' a_{s,1} & \la_1' a_{s,2}  & \dots & \la_1' a_{s,s}\\ \la_2' a_{s-1,1} & \la_2' a_{s-1,2}  & \dots & \la_2' a_{s-1,s}\\ \vdots & \vdots & \dots & \vdots \\  \la_s' a_{1,1} & \la_s' a_{1,2}  & \dots & \la_s' a_{1,s} \end{array} \right).$$
So,
\begin{align*}
x &= [c_1', c_2', \dots, c_s'](A^{-1})^t\\
  &= \left(\sum_{i=1}^s \la_i' a_{s-i+1,1} c_i', \sum_{i=1}^s \la_i' a_{s-i+1, 2}c_i', \dots , \sum_{i=1}^s \la_i' a_{s-i+1, s} c_i'\right)\\
  &= \left(\sum_{i=1}^s a_{i, 1} e_i, \sum_{i=1}^s a_{i, 2} e_i, \dots, \sum_{i=1}^s a_{i, s} e_i \right)\\
  &= [e_1, e_2, \dots, e_s]A \in [C_1 \dots C_s]\,A.
\end{align*}
\end{proof}

\begin{remark}
Theorem \ref{self-dual} generalizes, and relaxes the assumptions of, \cite[Corollary III.6]{MJ} which assumes that $R$ is a finite field, the input codes are free, and $\la_1=\dots =\la_s$.
\end{remark}

\begin{example}
Over $\Z_{25}$, let $C=\Z_{25}(1,7)$. Then $C$ is a linear self-dual code of length 2 over $\Z_{25}$. Indeed, for $a,b\in Z_{25}$, $\langle(a,7a), (b,7b)\rangle =ab(1+49)=0$. So, $C\subseteq C^\perp$. On the other hand, for $(x,y)\in C^\perp$ and $(a,7a)\in C$, $\langle(x,y), (a,7a)\rangle =0$ implies that $a(x+7y)=0$. Taking $a\in U(\Z_{25})$ yields $x+7y=0$ and, thus, $y=-7^{-1}x=7x$. So, $(x,y)=(x,7x)\in C$. Thus, $C^\perp \subseteq C$. Now, take $A=\left(\begin{array}{cc} 1&7\\7&1 \end{array}\right)$. Then $AA^t=\mbox{adiag}(14,14)$ and $14\in U(\Z_{25})$. It then follows from Theorem \ref{self-dual} that $[C \, C]\,A$ is self-dual. As a side, it can be checked that $[C\,C]\,A$ contains no codeword of weight, while it contains, for instance, the codeword $\left(\begin{array}{cc} 14 & 0 \\ 23 & 0 \end{array}\right)$, which is of weight 2. So, the minimum distance of this MPC is 2, which is the same as the minimum distance of $C$. On the other hand, $C$ is free of rank 1, so its information rate is $1/2$. Similarly, $[C\,C]\,A$ is free of rank 2 and length 4, so its information rate is also $1/2$. So, despite the fact that this MPC caused doubling of the length of $C$ and its cardinality, it nonetheless preserved the self-duality and both the minimum distance and the information rate of $C$.
\end{example}

Our next goal is Theorem \ref{self MPC}, in which we give a sufficient condition for the equivalence of self-orthogonality (resp. self-duality) of an MPC and self-orthogonality (resp. self-duality) of its input codes.

\begin{lem}\label{C_A}
Let $A=(a_{i,j})\in M_{s\times s}(R)$ be non-singular and $C_1, \dots, C_s$ linear codes of the same length over $R$. Then $[C_1 \dots C_s]\,A=[C_1 \dots C_s]$ if either of the following holds:

1. $C_1 \subseteq C_2 \subseteq \dots \subseteq C_s$ and $A$ is upper triangular,

2. $C_s \subseteq C_{s-1} \subseteq \dots \subseteq C_1$ and $A$ is lower triangular,

3. $A$ is diagonal, or

4. $C_1 = C_2 = \dots = C_s$.
\end{lem}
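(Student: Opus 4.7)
The plan is to handle each of the four cases by establishing both inclusions $[C_1 \dots C_s]\,A \subseteq [C_1 \dots C_s]$ and $[C_1 \dots C_s] \subseteq [C_1 \dots C_s]\,A$. Throughout, I will use the column description: a codeword of $[C_1 \dots C_s]\,A$ has $j$th column $\sum_{i=1}^{s} a_{i,j}c_i$ with $c_i\in C_i$, while a codeword of $[C_1 \dots C_s]$ has $j$th column $d_j\in C_j$. The forward inclusion amounts to showing each sum $\sum_{i} a_{i,j}c_i$ lies in $C_j$, and the reverse inclusion amounts to solving the linear system $(c_1,\dots,c_s)\,A=(d_1,\dots,d_s)$ with each $c_i\in C_i$.

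For case 3, where $A=\mbox{diag}(a_1,\dots,a_s)$, the $j$th column of an $[C_1\dots C_s]\,A$-codeword is simply $a_j c_j\in C_j$, giving the forward inclusion. Non-singularity of $A$ forces each $a_j\in U(R)$ (the product of the diagonal entries must be a unit), so $a_j C_j = C_j$ and the reverse inclusion follows. Case 4 is equally direct: when all $C_i$ equal a common linear code $C$, every sum $\sum_i a_{i,j}c_i$ lies in $C$, and the reverse inclusion is obtained by writing any $(d_1,\dots,d_s)\in[C\dots C]$ as $[(d_1,\dots,d_s)\,A^{-1}]\,A$, noting that the entries of $(d_1,\dots,d_s)\,A^{-1}$ are $R$-linear combinations of elements of $C$, hence again in $C$.

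The substantive work is in cases 1 and 2, which are symmetric, so I would present case 1 in detail and note that case 2 follows by an analogous argument (with backward substitution replacing forward substitution). For $A$ upper triangular, $a_{i,j}=0$ whenever $i>j$, so the $j$th column of an $[C_1\dots C_s]\,A$-codeword is $\sum_{i=1}^{j} a_{i,j}c_i$; under the chain $C_1\subseteq\cdots\subseteq C_s$ each $c_i$ with $i\leq j$ lies in $C_j$, yielding the forward inclusion. For the reverse direction I would proceed by forward substitution. Since $A$ is non-singular and triangular, $\det(A)=\prod_i a_{i,i}\in U(R)$, whence every diagonal entry $a_{i,i}$ is a unit. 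Starting from $a_{1,1}c_1=d_1$, set $c_1=a_{1,1}^{-1}d_1\in C_1$; then inductively, for $j\geq 2$, set
\[
c_j \;=\; a_{j,j}^{-1}\Bigl(d_j-\sum_{i=1}^{j-1} a_{i,j}c_i\Bigr),
\]
which lies in $C_j$ because $d_j\in C_j$ and each previously constructed $c_i\in C_i\subseteq C_j$.

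The conceptual obstacle is localized precisely in this last step: the triangular shape of $A$ ensures that the system can be solved by substitution, but nothing in the substitution itself guarantees that the computed $c_i$ falls in the correct input code $C_i$. It is exactly the containment hypothesis $C_1\subseteq\cdots\subseteq C_s$ (respectively $C_s\subseteq\cdots\subseteq C_1$ for case 2) that makes the intermediate expressions lie in the required code. I expect no further subtlety beyond this bookkeeping, and the non-singularity of $A$ plus the standard fact that diagonal entries of an invertible triangular matrix over a commutative ring are units suffices to drive the construction.
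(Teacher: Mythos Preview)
Your proof is correct and follows essentially the same approach as the paper. For case~1 the paper argues at the level of column-sets, showing directly that $a_{1,j}C_1+\cdots+a_{j,j}C_j=C_j$ from $a_{j,j}\in U(R)$ and the chain condition, while you make the reverse inclusion explicit via forward substitution; these are the same computation presented differently, and your treatment of cases~3 and~4 matches the paper's almost verbatim (including the use of $A^{-1}$ for the reverse inclusion in case~4).
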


\begin{proof} \hfill

1. Suppose that $C_1 \subseteq C_2 \subseteq \dots \subseteq C_s$ and $A$ is upper triangular. Then $a_{i,j}=0$ for $i> j$. Moreover, $a_{j,j}\in U(R)$ for all $j=1, \dots, s$ since $A$ is non-singular. It follows that
$$[C_1 \dots C_s]\,A=[a_{1,1}C_1, a_{1,2}C_1+a_{2,2}C_2, \dots, a_{1,s}C_1+a_{2,s}C_2 +\dots + a_{s,s}C_s].$$
Since $a_{1,1}\in U(R)$ and $C_1$ is linear, $a_{1,1}C_1=C_1$. Similarly, $a_{2,2}C_2=C_2$. Since $C_1 \subseteq C_2$ and $C_2$ is linear, $a_{1,2}C_1\subseteq C_2$. It follows that $a_{1,2}C_1 + a_{2,2}C_2=a_{1,2}C_1 + C_2 =C_2$. We continue in this manner to get that $a_{1,j}C_1 +a_{2,j}C_2 + \dots + a_{j,j}C_j = C_j$ for all $j=1 ,\dots, s$. Thus, $[C_1 \dots C_s]\,A=[C_1 \dots C_s]$ as claimed.

2. If $C_s \subseteq C_{s-1} \subseteq \dots \subseteq C_1$ and $A$ is lower triangular, the proof is similar to case 1 above with the obvious adjustments.

3. Suppose that $A$ is diagonal. So, $a_{i,j}=0$, for all $i\neq j$, and $a_{j,j}\in U(R)$, for all $j=1, \dots, s$ (since $A$ is non-singular). It follows that $$[C_1 \dots C_s]\,A= [a_{1,1}C_1 \dots a_{s,s}C_s]=[C_1 \dots C_s]$$
because $a_{j,j}C_j=C_j$, as $a_{j,j}\in U(R)$ and $C_j$ is linear for every $j=1, \dots, s$.

4. Let $x\in [C \dots C]\,A$. So, $x=(c_1 \dots c_s)\,A$ for some $c_1, \dots, c_s\in C$. By definition, $x=(\sum_{i=1}^s a_{i,1}c_i, \dots, \sum_{i=1}^s a_{i,s}c_i)$. As $\sum_{i=1}^s a_{i,j}c_i\in C$ for all $j=1, \dots, s$, $x\in [C \dots C]$ and, thus, $[C \dots C]\, A\subseteq [C \dots C]$. Conversely, let $x\in [C \dots C]$. Applying the previous argument to $A^{-1}$, we have $[C \dots C]\,A^{-1} \subseteq [C \dots C]$. Now, $xA^{-1}\in [C \dots C]\,A^{-1} \subseteq [C \dots C]$. Hence, $x\in [C \dots C]\,A$ and, therefore, $[C \dots C]\subseteq [C \dots C]\,A$.
\end{proof}

\begin{thm}\label{self MPC}
Let $A\in M_{s\times s}(R)$ be non-singular and $C_1, \dots, C_s$ linear codes of the same length over $R$ such that $[C_1 \dots C_s]\,A=[C_1 \dots C_s]$. Then,

1. $[C_1 \dots C_s]\,A$ is self-orthogonal if and only if $C_1, \dots, C_s$ are all self-orthogonal.

2. $[C_1 \dots C_s]\,A$ is self-dual if and only if $C_1, \dots, C_s$ are all self-dual.
\end{thm}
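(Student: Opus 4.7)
The plan is to use the hypothesis $[C_1 \dots C_s]\,A = [C_1 \dots C_s]$ to reduce both parts to the direct-product code $[C_1 \dots C_s]$, whose codewords are tuples $(c_1, \dots, c_s)$ with $c_i\in C_i$. The key observation is that the Euclidean form on $R^{ms}$ decomposes block-wise: for codewords $(c_1, \dots, c_s), (c_1', \dots, c_s') \in [C_1 \dots C_s]$,
$$\langle (c_1, \dots, c_s),\,(c_1', \dots, c_s') \rangle \;=\; \sum_{i=1}^s \langle c_i, c_i' \rangle.$$
The probe I shall repeatedly use is that, for each fixed $k$ and $c\in C_k$, the tuple $(0, \dots, 0, c, 0, \dots, 0)$, with $c$ sitting in the $k$-th block, is a legitimate codeword of $[C_1 \dots C_s]$, since $0$ lies in every linear $C_i$.

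For part 1, the reverse direction is immediate from the block decomposition: if each $C_i$ is self-orthogonal, every summand $\langle c_i, c_i'\rangle$ vanishes, so $[C_1 \dots C_s]$ is self-orthogonal. For the forward direction, fix $k$ and $c, c'\in C_k$ and feed the two zero-elsewhere probes into the inner product; the block sum collapses to $\langle c, c'\rangle$, which must therefore vanish. Hence each $C_k$ is self-orthogonal.

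For part 2, I would first apply Theorem \ref{dual of MPC} to $I_s$ (which is non-singular) to obtain $[C_1 \dots C_s]^\perp = [C_1^\perp \dots C_s^\perp]$. Self-duality of $[C_1 \dots C_s]$ then becomes the set equality $[C_1 \dots C_s] = [C_1^\perp \dots C_s^\perp]$. Applying the zero-elsewhere probes again, any $c\in C_k$ lifts to a codeword of $[C_1^\perp \dots C_s^\perp]$, forcing $c\in C_k^\perp$; the reverse inclusion follows symmetrically, yielding $C_k = C_k^\perp$ for every $k$. Conversely, if each $C_i$ is self-dual, substituting $C_i^\perp = C_i$ in the dual formula gives self-duality of $[C_1 \dots C_s]$ and therefore of $[C_1 \dots C_s]\,A$. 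There is no real obstacle here: the content is essentially a direct-product phenomenon, and the hypothesis $[C_1 \dots C_s]\,A = [C_1 \dots C_s]$, whose sufficient conditions are catalogued in Lemma \ref{C_A}, merely translates the statement to this clean setting.
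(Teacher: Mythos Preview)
Your argument is correct and essentially matches the paper's approach. The paper applies Theorem~\ref{dual of MPC} with $I_s$ to obtain $([C_1 \dots C_s]\,A)^\perp=[C_1^\perp \dots C_s^\perp]$ and then handles both parts uniformly via the componentwise containment/equality; you do exactly this for part~2, while for part~1 you give a direct block-decomposition computation instead of invoking the dual formula --- a harmless and slightly more elementary variation of the same idea.
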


\begin{proof} Assume that $[C_1 \dots C_s]\,A=[C_1 \dots C_s]$. Note that $[C_1 \dots C_s]=[C_1 \dots C_s]\,I_s$. By Theorem \ref{dual of MPC}, we have
$$([C_1 \dots C_s]\,A)^\perp=([C_1 \dots C_s]\,I_s)^\perp =[C_1^\perp \dots C_s^\perp]\,(I_s^{-1})^t=[C_1^\perp \dots C_s^\perp].$$
So, $[C_1 \dots C_s]\,A$ is self-orthogonal (resp. self-dual) if and only if $[C_1 \dots C_s]\subseteq [C_1^\perp \dots C_s^\perp]$ (resp. $[C_1 \dots C_s]=[C_1^\perp \dots C_s^\perp]$). The claimed conclusion is now obvious.
\end{proof}


\begin{cor}
Let $A\in M_{s\times s}(R)$ be non-singular and $C_1, \dots, C_s$ linear codes of the same length over $R$ such that any of the conditions of Lemma \ref{C_A} holds. Then,

1. $[C_1 \dots C_s]\,A$ is self-orthogonal if and only if $C_1, C_2, \dots, C_s$ are all self-orthogonal.

2. $[C_1 \dots C_s]\,A$ is self-dual if and only if $C_1, C_2, \dots, C_s$ are all self-dual.
\end{cor}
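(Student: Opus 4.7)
The plan is to chain the two previously established results — Lemma \ref{C_A} and Theorem \ref{self MPC} — since the corollary is essentially the logical composition of these two facts, and nothing beyond is needed.

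First, I would note that under the standing assumption that $A\in M_{s\times s}(R)$ is non-singular, any one of the four conditions (1)--(4) in Lemma \ref{C_A} provides exactly the hypothesis required by that lemma. Hence the conclusion $[C_1 \dots C_s]\,A = [C_1 \dots C_s]$ holds in every one of the four cases, with no further case analysis needed (beyond what the lemma already performs internally).

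Second, this equality $[C_1 \dots C_s]\,A = [C_1 \dots C_s]$ is precisely the hypothesis of Theorem \ref{self MPC}, and $A$ is non-singular by assumption. Invoking part 1 of Theorem \ref{self MPC} yields that $[C_1 \dots C_s]\,A$ is self-orthogonal if and only if each $C_i$ is self-orthogonal, which is claim (1) of the corollary. Invoking part 2 of the same theorem yields the analogous self-duality equivalence, which is claim (2).

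The main ``obstacle'' here is really just bookkeeping: there is no new computation. The only point worth stating in the written proof is that the four cases of Lemma \ref{C_A} each funnel into the single hypothesis of Theorem \ref{self MPC}, so the corollary inherits its conclusion uniformly across them. A one-line proof along the lines of ``Apply Lemma \ref{C_A} followed by Theorem \ref{self MPC}'' would, strictly speaking, suffice.
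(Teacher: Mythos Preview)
Your proposal is correct and matches the paper's own proof exactly: the paper's argument is the single line ``Apply Lemma \ref{C_A} and Theorem \ref{self MPC},'' which is precisely the composition you describe.
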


\begin{proof}
Apply Lemma \ref{C_A} and Theorem \ref{self MPC}.
\end{proof}

\section{\bf Applications}\label{application}



For a full-rank matrix $A\in M_{s\times l}(R)$, denote by $C_{R_i}$ the code of length $l$ over $R$ generated by the upper $i$ rows of $A$ for $i=1, \dots, s$. For linear codes $C_1, \dots, C_s$ of the same length over $R$ with minimum distances $d_1, \dots, d_s$, respectively, it was recently shown in \cite{BD} that the minimum distance $d$ of the matrix-product code $[C_1 \dots C_s]\,A$ satisfies: \begin{equation} d\geq \mbox{min}\{d_i \d_i\}_{1\leq i\leq s},\end{equation} where $\d_1, \dots, \d_s$ are the minimum distances of $C_{R_1}, \dots, C_{R_s}$, respectively. This result was first known over finite fields (\cite{BN}) and finite chain rings (\cite{As}), and was lately shown to hold over any commutative ring with identity (\cite{BD}).

\begin{lem}\label{diag1}
Let $R$ be such that $2=2\cdot 1_R$ is not a zero divisor, and let $A=\left(\begin{array}{ccc} 1&u&1\\-1&0&1\end{array}\right)$ with $u\in R$ not a zero divisor. Then, $AA^t=\mbox{diag}(2+u^2,2)$, $\d_1=3$, and $\d_2=2$.
\end{lem}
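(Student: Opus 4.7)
The plan is to verify each of the three claims in turn, all via direct calculation with only light module-theoretic bookkeeping.

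First I would compute $AA^t$ entry-by-entry: the $(1,1)$ entry is $1+u^2+1 = 2+u^2$, the $(2,2)$ entry is $1+0+1 = 2$, and both off-diagonal entries are $-1+0+1 = 0$, so $AA^t = \mathrm{diag}(2+u^2,2)$ as claimed. This step is purely mechanical.

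Next I would handle $\delta_1$. By definition $C_{R_1}$ is the $R$-submodule of $R^3$ generated by the single vector $(1,u,1)$, so every codeword has the form $r(1,u,1) = (r,ru,r)$ for some $r \in R$. For a nonzero codeword one has $r \neq 0$, and then $ru \neq 0$ as well because $u$ is not a zero divisor. Thus every nonzero codeword has Hamming weight exactly $3$, and since the code has length $3$ this forces $\delta_1 = 3$.

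For $\delta_2$, the code $C_{R_2}$ consists of all vectors $r(1,u,1) + s(-1,0,1) = (r-s,\,ru,\,r+s)$ with $r,s \in R$. Taking $r = s = 1$ produces the codeword $(0,u,2)$, whose weight is exactly $2$ since both $u$ and $2$ are nonzero (being non-zero-divisors), so $\delta_2 \le 2$. For the reverse inequality I would show no nonzero codeword has weight $1$ by a short case split on which two coordinates vanish. If $r-s=0$ and $ru=0$, then $r=0$ (as $u$ is not a zero divisor) and hence $s=0$. If $r-s=0$ and $r+s=0$, then $2r=0$, and since $2$ is not a zero divisor $r=s=0$. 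If $ru=0$ and $r+s=0$, then again $r=0$ and $s=0$. In every case the codeword is zero, so $\delta_2 \ge 2$ and equality holds.

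No step is really an obstacle here; the only subtlety is remembering to use the non-zero-divisor hypotheses on both $u$ and $2$ in the case analysis for $\delta_2$, since otherwise one could not rule out weight-$1$ codewords.
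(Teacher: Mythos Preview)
Your proof is correct and follows essentially the same approach as the paper's: a direct entrywise check of $AA^t$, the observation that any nonzero $(r,ru,r)$ has all three coordinates nonzero, and a three-case elimination of weight-$1$ words in $C_{R_2}$. The only cosmetic difference is your choice of weight-$2$ witness $(0,u,2)$ in place of the paper's $(-1,0,1)$; the paper's witness avoids even needing $u\neq 0$ and $2\neq 0$ at that step, but both are valid.
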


\begin{proof}
It is straightforward to check that $AA^t=\mbox{diag}(2+u^2,2)$. As $C_{R_1}=R(1,u,1)$, an element of $C_{R_1}$ is of the form $(\al, \al u, \al)$ for some $\al\in R$. Suppose that $\mbox{wt}(\al,\al u, \al)=1$. It is clearly impossible to have this assumption with $\al\neq 0$. But if $\al=0$, then $(\al,\al u, \al)=(0,0,0)$, which is impossible as well. So, there is no $\al\in R$ such that $\mbox{wt}(\al,\al u, \al)=1$. Similarly, suppose that $\mbox{wt}(\al, \al u, \al)=2$. It is obvious that $\al$ cannot be zero. But if $\al\neq 0$, then we must have $\al u=0$. Since $u$ is not a zero divisor, $\al=0$, a contradiction. So, there is no $\al\in R$ such that $\mbox{wt}(\al, \al u, \al)=2$. Thus, $\d_1=3$.

On the other hand, as $C_{R_2}=R(1,u,1)+R(-1, 0, 1)$, an element of $C_{R_2}$ is of the form \linebreak $(\al-\be, \al u, \al+\be)$ for some $\al, \be \in R$. Suppose that $\mbox{wt}(\al-\be, \al u, \al+\be)=1$. Firstly, if $\al-\be \neq 0$, then $\al u=\al+\be=0$. Since $u$ is not a zero divisor, $\al=0$. But then $\al+\be=0$ implies that $\be=0$. So, $\al-\be=0$, a contradiction. Secondly, if $\al u\neq 0$, then $\al-\be=\al+\be=0$. So, $2\al=0$. Since $2$ is not a zero divisor, $\al=0$. So, $\al u=0$, a contradiction. Thirdly, if $\al+\be\neq 0$, then $\al-\be=\al u =0$. Since $u$ is not a zero divisor, $\al=0$. But then $\al-\be=0$ implies that $\be=0$. So, $\al+\be =0$, a contradiction. So, there is no $\al, \be\in R$ such that $\mbox{wt}(\al-\be, \al u, \al+\be)=1$. Since $(-1, 0,1)\in C_{R_2}$ and $\d_2 \geq 2$, it must follow that $\d_2 =2$.
\end{proof}

\begin{cor}\label{sss}
Let $R$ be such that $2=2.1_R$ is not a zero divisor. If there exist self-orthogonal linear codes $C_1, C_2$ of length $m$ over $R$ with respective minimum distances $d_1, d_2$, then there \linebreak exists a self-orthogonal matrix-product code of length $3m$ over $R$ with minimum distance satisfying $d\geq \mbox{min}\{3d_1, 2d_2\}$.
\end{cor}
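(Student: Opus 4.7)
The plan is to assemble the result from three earlier ingredients in the paper: Lemma \ref{diag1} (which supplies a ready-made matrix $A$ with a diagonal $AA^t$ and explicit column-code distances), Theorem \ref{self-orth 1} (which converts self-orthogonality of the inputs into self-orthogonality of the MPC when $AA^t$ is diagonal), and the minimum-distance bound (1) borrowed from \cite{BD}. Concretely, pick any element $u\in R$ which is not a zero divisor (for instance $u=1_R$, which is always a unit and hence not a zero divisor in any nonzero ring) and form $A=\left(\begin{smallmatrix} 1 & u & 1 \\ -1 & 0 & 1 \end{smallmatrix}\right)\in M_{2\times 3}(R)$ as in Lemma \ref{diag1}. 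Then set up the matrix-product code $[C_1\,C_2]\,A$, which is a linear code over $R$ of length $3m$.

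For the self-orthogonality conclusion, I would invoke Theorem \ref{self-orth 1} with $s=2$, $l=3$: by Lemma \ref{diag1} we have $AA^t=\mathrm{diag}(2+u^2,2)$, which is diagonal, and the hypotheses of Theorem \ref{self-orth 1} are satisfied for free because we are assuming both $C_1$ and $C_2$ to be self-orthogonal (so the requirement ``$C_i$ self-orthogonal whenever $\la_i\neq 0$'' holds regardless of whether $2+u^2$ or $2$ happens to vanish in $R$). This immediately yields that $[C_1\,C_2]\,A$ is self-orthogonal.

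For the minimum-distance estimate, I need to apply the bound (1). This requires $A$ to have full rank, which I would verify directly: if $\al(1,u,1)+\be(-1,0,1)=0$, then $\al-\be=0$ and $\al+\be=0$, whence $2\al=0$; since $2$ is not a zero divisor this forces $\al=0$ and then $\be=0$. With $A$ of full rank, Lemma \ref{diag1} supplies $\d_1=3$ and $\d_2=2$, so (1) delivers
\[
d\;\geq\;\min\{d_1\d_1,\,d_2\d_2\}\;=\;\min\{3d_1,\,2d_2\},
\]
which is exactly the announced bound. There is no real obstacle here; the whole statement is a packaging of Lemma \ref{diag1}, Theorem \ref{self-orth 1}, and the bound (1). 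The only subtle point requiring attention is the choice of $u$ so that the zero-divisor hypothesis of Lemma \ref{diag1} is met, which is why taking $u=1_R$ (or any unit) is a clean and universally valid choice.
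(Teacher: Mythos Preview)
Your proposal is correct and follows essentially the same approach as the paper: use the matrix $A$ from Lemma~\ref{diag1}, invoke Theorem~\ref{self-orth 1} for self-orthogonality, and apply the bound~(1) for the distance estimate. You simply fill in a couple of details the paper leaves implicit (the concrete choice $u=1_R$ and the full-rank verification needed for~(1)).
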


\begin{proof}
Using the matrix $A$ of Lemma \ref{diag1}, it follows from Theorem \ref{self-orth 1} that $[C_1 C_2]\,A$ is self-orthogonal. Moreover, by (1), $d\geq \mbox{min}\{3d_1, 2d_2\}$.
\end{proof}

\begin{lem}\label{adiag1}
Let $R$ be such that there exists $u\in R$ with $u^2=-1$.
\begin{itemize}
\item[1.] For $A=\left(\begin{array}{ccc} 1&0&u\\0&1&u\end{array}\right)$, $AA^t=\mbox{adiag}(-1,-1)$ and $\d_1=\d_2=2$.

\item[2.] If further $2=2\cdot 1_R$ is not a zero divisor and $B=\left(\begin{array}{ccccc} 1&u&0&1&u\\u&1&u&0&1\end{array}\right)$, then $BB^t=\mbox{adiag}(3u,3u)$, $\d_1=4$, and $\d_2=3$.
\end{itemize}
\end{lem}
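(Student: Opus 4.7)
The plan is to handle the two parts separately. In each case I first verify the claimed identity for $AA^t$ (resp.\ $BB^t$) by a direct matrix multiplication, using $u^2=-1$ and (in Part~2) the resulting identity $2+2u^2=0$. A useful standing observation is that $u\in U(R)$, since $u\cdot(-u)=-u^2=1$; this will be invoked repeatedly to discard equations of the form $\al u=0$ with $\al\ne 0$.

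For Part~1, the product $AA^t$ has diagonal entries $1+u^2=0$ and off-diagonal entries $u^2=-1$, giving $\mbox{adiag}(-1,-1)$. A typical element of $C_{R_1}=R(1,0,u)$ is $(\al,0,\al u)$; it has weight $0$ if $\al=0$ and weight $2$ otherwise (since $u$ is a unit), so $\d_1=2$. A typical element of $C_{R_2}$ is $(\al,\be,(\al+\be)u)$, and a quick case analysis on which coordinate is nonzero, together with $u\in U(R)$, rules out weight $1$; since $(1,0,u)\in C_{R_2}$ has weight $2$, $\d_2=2$.

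For Part~2, the same sort of computation gives $BB^t=\mbox{adiag}(3u,3u)$. For $\d_1$, a nonzero element of $C_{R_1}=R(1,u,0,1,u)$ has all four entries $\al,\al u,\al,\al u$ nonzero (since $u\in U(R)$ and $\al\ne 0$), so its weight is exactly $4$, whence $\d_1=4$. For $\d_2$, the structural key is that rows $1$ and $2$ of $B$ coincide in coordinates $2$ and $5$, so every codeword of $C_{R_2}$ has equal entries in those two positions; a general codeword reads $(\al+\be u,\,\al u+\be,\,\be u,\,\al,\,\al u+\be)$. I split on whether the common value $\al u+\be$ is zero; in each branch I enumerate which of the remaining coordinates are nonzero and use $u\in U(R)$ together with the non-zero-divisor hypothesis on $2$ to force contradictions for any weight $1$ or $2$ configuration. (The hypothesis that $2$ is not a zero divisor is really needed only in the subcase where the relations collapse to $2\be=0$.) Finally, the choice $\al=u,\ \be=1$ produces the codeword $(2u,0,u,u,0)$, which has weight $3$ because $2u\ne 0$, yielding $\d_2=3$.

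The main obstacle will be the bookkeeping in the weight-$1$ and weight-$2$ elimination for $C_{R_2}$ in Part~2; however, the repeated-coordinate observation reduces the casework to a short handful of subcases, each of which closes by invoking either $u\in U(R)$ or the non-zero-divisor hypothesis on $2$.
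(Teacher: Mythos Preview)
Your proof is correct and follows essentially the same method the paper intends: the paper's own proof reads only ``Similar to the proof of Lemma~\ref{diag1}'', i.e.\ compute $AA^t$ (resp.\ $BB^t$) directly and then rule out low-weight codewords by case analysis using the non-zero-divisor/unit hypotheses. Your argument matches this template, with the small bonus that you sharpen ``$u$ is not a zero divisor'' to ``$u\in U(R)$'' (valid here since $u^2=-1$) and exploit the coincidence of columns $2$ and $5$ of $B$ to cut the casework for $\d_2$ in Part~2.
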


\begin{proof}
Similar to the proof of Lemma \ref{diag1}
\end{proof}

\begin{cor}\label{adiag2}
Let $R$ be such that $-1$ is a perfect square. If there exist self-orthogonal linear codes $C_1, C_2$ of length $m$ over $R$ whose respective minimum distances are $d_1, d_2$ with $C_1\subseteq C_2^\perp$ and $C_2\subseteq C_1^\perp$, then

1. There exists a self-orthogonal matrix-product code of length $3m$ over $R$ with minimum distance satisfying $d\geq \mbox{min}\{2d_1, 2d_2\}$.

2. If further $2\cdot 1_R$ is not a zero divisor, then there exists a self-orthogonal matrix-product code of length $5m$ over $R$ with minimum distance satisfying $d\geq \mbox{min}\{4d_1, 3d_2\}$.
\end{cor}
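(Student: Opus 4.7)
The plan is to mirror the proof of Corollary \ref{sss}, using the two specific matrices supplied by Lemma \ref{adiag1} together with Theorem \ref{self-orth 2} and the minimum-distance bound \eqref{1} (i.e., inequality (1)). The hypothesis that $-1$ is a perfect square guarantees the existence of $u\in R$ with $u^2=-1$, which is exactly what is needed to invoke Lemma \ref{adiag1}.

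For part 1, I would take $A=\left(\begin{array}{ccc}1&0&u\\0&1&u\end{array}\right)$ and form the MPC $[C_1\,C_2]\,A$, which has length $3m$. By Lemma \ref{adiag1}(1), $AA^t=\mbox{adiag}(-1,-1)$, so $\la_1=\la_2=-1\neq 0$. The hypotheses $C_1\subseteq C_2^\perp$ and $C_2\subseteq C_1^\perp$ are exactly the condition $C_i\subseteq C_{s-i+1}^\perp$ for $i=1,2$ (with $s=2$) needed to apply Theorem \ref{self-orth 2}, which yields self-orthogonality of $[C_1\,C_2]\,A$. The bound on its minimum distance then follows from inequality (1), since Lemma \ref{adiag1}(1) provides $\d_1=\d_2=2$, giving $d\geq\min\{2d_1,2d_2\}$.

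For part 2, I would instead take $B=\left(\begin{array}{ccccc}1&u&0&1&u\\u&1&u&0&1\end{array}\right)$ and form the MPC $[C_1\,C_2]\,B$ of length $5m$. The additional hypothesis that $2\cdot 1_R$ is not a zero divisor is precisely what Lemma \ref{adiag1}(2) requires; it gives $BB^t=\mbox{adiag}(3u,3u)$ with $\d_1=4$, $\d_2=3$. Regardless of whether $3u$ vanishes, Theorem \ref{self-orth 2} applies since the hypotheses $C_1\subseteq C_2^\perp$ and $C_2\subseteq C_1^\perp$ cover every case in which $\la_i\neq 0$. Hence $[C_1\,C_2]\,B$ is self-orthogonal, and inequality (1) gives $d\geq\min\{4d_1,3d_2\}$.

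Because all the combinatorial and matrix-theoretic work has already been carried out in Lemma \ref{adiag1} and Theorem \ref{self-orth 2}, there is really no technical obstacle: the proof reduces to checking that the hypotheses of those two results are satisfied in each case and then invoking the minimum-distance bound \eqref{1}. The only point that deserves mention is that in part 2, one need not assume $3u\neq 0$, since the conditions on the input codes already cover both eventualities in Theorem \ref{self-orth 2}.
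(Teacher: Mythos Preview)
Your proposal is correct and follows essentially the same approach as the paper: apply Theorem \ref{self-orth 2} to the matrices $A$ and $B$ of Lemma \ref{adiag1}, and then invoke the minimum-distance bound (1). Your additional remark that $3u\neq 0$ need not be assumed in part 2 is a valid clarification, but it does not depart from the paper's argument.
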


\begin{proof}
Using the matrices $A$ and $B$ of Lemma \ref{adiag1}, it follows form Theorem \ref{self-orth 2} that $[C_1 C_2]\,A$ and $[C_1 C_2]\,B$ are self-orthogonal of lengths $3m$ and $5m$ and minimum distances satisfying $d\geq \mbox{min}\{2d_1, 2d_2\}$ and $d\geq \mbox{min}\{4d_1, 3d_2\}$, respectively.
\end{proof}

\begin{example}
It is a known fact that if $p$ and $q$ are odd primes, then $-1$ is a perfect square modulo $pq$ if and only if $-1$ is a perfect square modulo each of $p$ and $q$ (see \cite{NZM}). It is a also known that if $p$ is congruent to 1 modulo 4, then $-1$ is a perfect square modulo $p$. Let $p$ be a prime congruent to 1 modulo 4 and $R=\Z_{p^2}$. Then $-1$ is a perfect square in $R$. Let $x=(1,1, \dots, 1)\in R^p$, $y=(p,p, \dots, p)\in R^p$, $C_1= Rx$, and $C_2= Ry$. Then, $d_1=d_2=p$, $C_1 \subseteq C_2^\perp$, and $C_2\subseteq C_1^\perp$. Using the matrices $A$ and $B$ of Lemma \ref{adiag1}, it follows from Corollary \ref{adiag2} that the matrix-product codes $[C_1 C_2]\,A$ and $[C_1 C_2]\,B$ are both self-orthogonal of lengths $3p$ and $5p$ and minimum distances satisfying $d \geq 2p$ and $d\geq 3p$, respectively.
\end{example}

\begin{lem}\label{adiag3}
Let $R$ be such that $2=2.1_R\in U(R)$ and there exists $u\in R$ with $u^2=-1$. Then for $A=\left(\begin{array}{cc} 1&u\\u&1\end{array}\right)$, $AA^t=\mbox{adiag}(2u,2u)$, $\d_1=2$, and $\d_2=1$.
\end{lem}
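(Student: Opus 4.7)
The plan is to carry out the same kind of direct computation used in Lemma~\ref{diag1}, with the additional observation that $u^2=-1$ forces $u\in U(R)$ (with $u^{-1}=-u$), so $u$ is in particular not a zero divisor. First I would verify $AA^t=\mbox{adiag}(2u,2u)$ by expanding: the $(1,1)$ entry is $1\cdot 1+u\cdot u=1+u^2=0$, the $(2,2)$ entry is $u\cdot u+1\cdot 1=0$, and both off-diagonal entries are $u+u=2u$.

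For $\d_1$, I would note that $C_{R_1}=R(1,u)$ consists of the vectors $(\al,\al u)$ with $\al\in R$. If $(\al,\al u)$ is nonzero then $\al\neq 0$, and since $u$ is not a zero divisor, $\al u\neq 0$ as well, so such a codeword has weight exactly $2$. Since $(1,u)\in C_{R_1}$ is itself of weight $2$, this gives $\d_1=2$.

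For $\d_2$, since $C_{R_2}=R(1,u)+R(u,1)$, a general codeword is $(\al+\be u,\,\al u+\be)$ with $\al,\be\in R$. To witness $\d_2\le 1$, I would take $\al=1$ and $\be=-u$, producing
\[
(1+(-u)u,\; u+(-u)\cdot 1)=(1-u^2,0)=(2,0),
\]
which is nonzero because $2\in U(R)$. Since weight-$0$ codewords are just zero, $\d_2=1$.

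The argument is entirely routine, so there is no genuine obstacle; the only subtle point is tracking how each hypothesis is used. The identity $u^2=-1$ is what collapses the diagonal entries of $AA^t$ and, simultaneously, what produces the short combination $(1,u)-u(u,1)=(2,0)$ in $C_{R_2}$; the unit hypothesis $2\in U(R)$ is needed precisely at the final step to ensure $(2,0)\neq 0$ so that this really is a weight-$1$ codeword.
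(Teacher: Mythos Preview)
Your proof is correct and follows exactly the approach the paper intends: the paper's proof of Lemma~\ref{adiag3} simply reads ``Similar to the proof of Lemma~\ref{diag1}'', and what you have written is precisely the direct computation that this reference encodes. Your observation that $u^2=-1$ forces $u\in U(R)$ (so $u$ is not a zero divisor) is the right way to justify $\d_1=2$, and your explicit weight-$1$ codeword $(2,0)\in C_{R_2}$ correctly pinpoints where the hypothesis $2\in U(R)$ is used.
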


\begin{proof}
Similar to the proof of Lemma \ref{diag1}
\end{proof}

\begin{cor}\label{adiag4}
Let $R$ be such that $2=2.1_R\in U(R)$ and $-1$ is a perfect square. If there exist linear codes $C_1, C_2$ of length $m$ over $R$ whose respective minimum distances are $d_1, d_2$ with $C_1= C_2^\perp$ and $C_2= C_1^\perp$, then there exists a self-dual matrix-product code of length $2m$ over $R$ with minimum distance satisfying $d\geq \mbox{min}\{2d_1, d_2\}$.
\end{cor}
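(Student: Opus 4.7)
The plan is a direct application of Lemma \ref{adiag3}, Theorem \ref{self-dual}, and the minimum-distance bound (1) stated at the top of Section \ref{application}. There is essentially nothing to invent; the work is in checking hypotheses and assembling.

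First, I would take the matrix $A=\left(\begin{array}{cc} 1&u\\u&1\end{array}\right)$ supplied by Lemma \ref{adiag3}, where $u\in R$ satisfies $u^2=-1$. Since $u(-u)=1$, the element $u$ is a unit of $R$, and because $2\in U(R)$ by assumption, the anti-diagonal entries $2u$ of $AA^t=\mbox{adiag}(2u,2u)$ both lie in $U(R)$. Thus $A$ meets the hypothesis on $AA^t$ required by Theorem \ref{self-dual} with $s=2$, $\la_1=\la_2=2u$.

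Next, I would observe that the assumption $C_1=C_2^\perp$ and $C_2=C_1^\perp$ is exactly the condition $C_i=C_{s-i+1}^\perp$ appearing in Theorem \ref{self-dual} for $s=2$ (taking $i=1$ and $i=2$ in turn). Applying Theorem \ref{self-dual} then gives that the matrix-product code $[C_1\,C_2]\,A$ is self-dual. Its length is $m\cdot l = 2m$ since $A\in M_{2\times 2}(R)$ and the $C_i$ have length $m$, which produces the claimed self-dual code of length $2m$.

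Finally, for the distance bound I would invoke (1), which says $d\geq \min\{d_i\d_i\}_{1\leq i\leq 2}$ where $\d_i$ is the minimum distance of the code $C_{R_i}$ generated by the top $i$ rows of $A$. Lemma \ref{adiag3} already computes $\d_1=2$ and $\d_2=1$, so (1) collapses to $d\geq \min\{2d_1,\,d_2\}$, completing the proof. The main (and only) subtlety is the verification that $2u\in U(R)$ so that Theorem \ref{self-dual} genuinely applies, rather than merely Theorem \ref{self-orth 2}; but this is immediate from $2\in U(R)$ and $u^2=-1$. No separate case analysis or further calculation is needed, since all substantive computations have been isolated in Lemma \ref{adiag3} and the quoted bound (1).
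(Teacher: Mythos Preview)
Your proposal is correct and follows exactly the paper's approach: apply Theorem \ref{self-dual} with the matrix $A$ of Lemma \ref{adiag3} to obtain self-duality, then invoke the bound (1) together with $\d_1=2$, $\d_2=1$ for the distance estimate. The only addition you make is the explicit check that $2u\in U(R)$, which the paper leaves implicit.
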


\begin{proof}
Using the matrix $A$ of Lemma \ref{adiag3}, it follows from Theorem \ref{self-dual} that $[C_1 C_2]\,A$ is self-dual. Moreover, by (1), $d\geq \mbox{min}\{2d_1, d_2\}$.
\end{proof}

\begin{remark}
Under the same assumptions on $R$ of Lemma \ref{adiag3}, a square matrix of any size, like the one in Lemma \ref{adiag3}, can be constructed. If $s$ is even, then
$$A=\left(\begin{array}{cccccccc} 1&0&\dots&0&0&\dots&0&u \\ \vdots&\vdots&\dots&\vdots&\vdots&\dots&\vdots&\vdots \\ 0&0&\dots& 1&u&\dots&0&0 \\ 0&0&\dots&u&1&\dots&0&0\\ \vdots&\vdots&\dots&\vdots&\vdots&\dots&\vdots&\vdots \\ u&0&\dots&0&0&\dots &0&1 \end{array}\right)\in M_{s\times s}(R)$$
satisfies $AA^t=\mbox{adiag}(2u, 2u, \dots, 2u)$, $\d_1=\dots=\d_{s/2}=2$, and $\d_{s/2 +1}=\dots=\d_{s}=1$; while if $s$ is odd, then
$$A=\left(\begin{array}{ccccccccccc} 1&0&\dots&0&0&0&0&0&\dots&0&u \\ \vdots&\vdots&\dots&\vdots&\vdots&\vdots&\vdots&\vdots&\dots&\vdots&\vdots \\ 0&0&\dots&0& 1&0&u&0&\dots&0&0 \\ 0&0&\dots&0&0&1&0&0&\dots&0&0\\ 0&0&\dots&0&u&0&1&0&\dots&0&0\\ \vdots&\vdots&\dots&\vdots&\vdots&\vdots&\vdots&\vdots&\dots&\vdots&\vdots \\ u&0&\dots&0&0&0&0&0&\dots &0&1 \end{array}\right)\in M_{s\times s}(R)$$
satisfies $AA^t=\mbox{adiag}(2u, \dots, 2u, 1, 2u, \dots, 2u)$, $\d_1=\dots=\d_{(s-1)/2}=2$, and $\d_{(s+1)/2}=\dots=\d_s=1$. So, like Corollary \ref{adiag4}, Theorem \ref{self-dual} can be applied once there exist linear codes $C_1, \dots, C_s$ of length $m$ over $R$ whose respective minimum distances are $d_1,\dots, d_s$ with $C_i =C_{s-i+1}^\perp$, for $i=1, \dots, s$, to get a self-dual matrix-product code of length $sm$ and minimum distance satisfying 
\begin{equation*}
d\geq \left \{ \begin{array} {l@{\quad;\quad}l}
\vspace{.3cm}
\mbox{min}\{2d_1,\dots, 2d_{s/2},d_{s/2 +1},\dots,d_{s}\} & \mbox{if $s$ is even} \\
\mbox{min}\{2d_1,\dots, 2d_{(s-1)/2},d_{(s+1)/2 +1},\dots,d_{s}\} & \mbox{if $s$ is odd}.
\end{array} \right.
\end{equation*}

\end{remark}

\section*{Tables}

Finally, the two tables below give concrete examples highlighting the results of this section. All input codes $C_1$ and $C_2$ below are self-dual (and, hence, self-orthogonal), which can be found in references \cite{DKL}, \cite{KL}, or \cite{LDL}. The element -1 in the rings chosen is always a perfect square (see \cite[Lemma 4.2]{DKL} and \cite[Lemma 3.1]{KL}). Recall also that 2 is a unit in any commutative ring of odd characteristic. Generalizing a well-known result, it was shown in \cite{BD} that if $R$ is any commutative ring with identity, $A\in M_{s\times l}(R)$ is of full rank, and $C_1, \dots, C_s$ are free linear codes over $R$ of ranks $k_i$ for $i=1, \dots, s$, then the MPC $[C_1 \dots C_s]\,A$ is free of rank $\sum_{i=1}^s k_i$.

Table 1 concerns self-orthogonal MPCs and Table 2 concerns self-dual MPCs.\\

\begin{center}
\begin{tabular}{|c| c| c| c| c|}
\hline
 $R$ & $C_1$ & $C_2$ & $[C_1 C_2]\,A$ & Reason\\
\hline
$\mbox{GR}(11^2, 2)$ & $[12,6,6]$ & $[12,6,7]$ & $[36,12,d\geq 14]$ & Corollary \ref{sss} \\
\hline
$\mbox{GR}(11^2, 2)$, $\mbox{GR}(5^3, 2)$& $[12,6,6]$ & $[12,6,6]$ & $[36,12,d\geq 12]$ & Corollary \ref{adiag2}(1) \\
\hline
$\mbox{GR}(11^2, 2)$, $\mbox{GR}(5^3, 2)$& $[12,6,6]$ & $[12,6,6]$ & $[60,12,d\geq 18]$ & Corollary \ref{adiag2}(2) \\
\hline
$\mbox{GR}(5^3, 2)$, $\mbox{GR}(3^4, 2)$, $\mbox{GR}(3^2, 2)$ & $[10,5,5]$ & $[10,5,5]$ & $[30,10,d\geq 10]$ & Corollary \ref{adiag2}(1) \\
\hline
$\mbox{GR}(5^3, 2)$ & $[10,5,5]$ & $[10,5,5]$ & $[50,10,d\geq 15]$ & Corollary \ref{adiag2}(2) \\
\hline
$\mbox{GR}(3^2, 2)[x]/(x^2-3)$& $[8,4,5]$ & $[8,4,5]$ & $[24,8,d\geq 10]$ & Corollary \ref{adiag2}(1) \\
\hline
$\Z_{25}$, $\mbox{GR}(3^2, 2)$, $\mbox{GR}(3^2, 2)[x]/(x^2-3)$ & $[6,3,4]$ & $[6,3,4]$ & $[18,6,d\geq 8]$ & Corollary \ref{adiag2}(1) \\
\hline
$\Z_{25}$ & $[6,3,4]$ & $[6,3,4]$ & $[30,6,d\geq 12]$ & Corollary \ref{adiag2}(2) \\
\hline
$\mbox{GR}(3^2, 2)$& $[4,2,3]$ & $[4,2,3]$ & $[12,4,d\geq 6]$ & Corollary \ref{adiag2}(1) \\
\hline
\end{tabular}
\end{center}
\begin{center} \tablename{$\;$1}: Self-orthogonal MPCs \end{center}
\hfill

\begin{center}
\begin{tabular}{|c| c| c| c| c|}
\hline
 $R$ & $C_1$ & $C_2$ & $[C_1 C_2]\,A$ & Reason\\
\hline
$\mbox{GR}(11^2, 2)$ & $[12,6,7]$ & $[12,6,7]$ & $[24,12,d\geq 7]$ & Corollary \ref{adiag4} \\
\hline
$\mbox{GR}(11^2, 2)$, $\mbox{GR}(5^3, 2)$& $[12,6,6]$ & $[12,6,6]$ & $[24,12,d\geq 6]$ & Corollary \ref{adiag4} \\
\hline
$\mbox{GR}(5^3, 2)$, $\mbox{GR}(3^4, 2)$, $\mbox{GR}(3^2, 2)$ & $[10,5,5]$ & $[10,5,5]$ & $[20,10,d\geq 5]$ & Corollary \ref{adiag4} \\
\hline
$\mbox{GR}(3^2, 2)[x]/(x^2-3)$& $[8,4,5]$ & $[8,4,5]$ & $[16,8,d\geq 5]$ & Corollary \ref{adiag4} \\
\hline
$\Z_{25}$, $\mbox{GR}(3^2, 2)$, $\mbox{GR}(3^2, 2)[x]/(x^2-3)$ & $[6,3,4]$ & $[6,3,4]$ & $[12,6,d\geq 4]$ & Corollary \ref{adiag4} \\
\hline
$\mbox{GR}(3^2, 2)$& $[4,2,3]$ & $[4,2,3]$ & $[8,4,d\geq 3]$ & Corollary \ref{adiag4} \\
\hline
\end{tabular}
\end{center}
\begin{center} \tablename{$\;$2}: Self-dual MPCs \end{center}


\section*{Acknowledgement}
The authors extend their thanks to Patrick Sol\'{e} for some useful discussion. They also would like to express their gratitude to King Khalid University for providing administrative and technical support. A. Deajim would also like to thank the University Council and the Scientific Council of King Khalid University for approving a sabbatical leave request for the academic year 2018-2019, during which this paper was prepared.

\end{document}